\documentclass[pra,aps,twocolumn,superscriptaddress]{revtex4-1}
\usepackage[dvips]{graphicx}
\usepackage{amsmath,amssymb,amsthm,dsfont,fullpage,hyperref}

\newcommand{\bra}[1]{\langle#1|}
\newcommand{\ket}[1]{|#1\rangle}
\newcommand{\braket}[2]{\langle#1|#2\rangle}
\newcommand{\Tr}{\operatorname{Tr}}

\newcommand{\hilb}[1]{\mathcal{#1}}

\newtheorem{dfn}{Definition}
\newtheorem{lmm}{Lemma}
\newtheorem{thm}{Theorem}
\newtheorem{cor}{Corollary}

\setcounter{MaxMatrixCols}{20}

\begin{document}

\title{Accessible Information and Informational Power of Quantum $2$-designs}

\author{Michele \surname{Dall'Arno}}

\email{cqtmda@nus.edu.sg}

\affiliation{Centre for Quantum Technologies, National University of Singapore,
  3 Science Drive 2, 117543 Singapore, Republic of Singapore}

\date{\today}

\begin{abstract}
  The accessible information and the informational power quantify the amount of
  information extractable from a quantum ensemble and by a quantum measurement,
  respectively. So-called spherical quantum $2$-designs constitute a class of
  ensembles and measurements relevant in testing entropic uncertainty relations,
  quantum cryptography, and quantum tomography. We provide lower bounds on the
  entropy of $2$-design ensembles and measurements, from which upper bounds on
  their accessible information and informational power follow, as a function of
  the dimension only. We show that the statistics generated by $2$-designs,
  although optimal for the abovementioned protocols, never contains more than
  one bit of information. Finally, we specialize our results to the relevant
  cases of symmetric informationally complete (SIC) sets and maximal sets of
  mutually unbiased bases (MUBs), and we generalize them to the arbitrary-rank
  case.
\end{abstract}

\maketitle

\section{Introduction}

Quantum theory is arguably the most complete and successful description of the
inherent evolution of physical systems. At the same time, any information we can
access about physical systems is ultimately classical. The interface between the
quantum and classical domains is constituted by quantum ensembles and quantum
measurements, the former capable of preparing quantum states upon input of some
classical information, the latter of extracting some classical information from
quantum systems.

From the fundamental viewpoint, it is thus crucial to characterize ensembles and
measurements in terms of the maximal amount of extractable information, leading
to results such as entropic uncertainty relations~\cite{San95, BW07, WW10, BR11,
  BHOW13} and device-independent quantum information
processing~\cite{Bru14}. This characterization has deep consequences in a
plethora of applications, such as information locking~\cite{DHLST04}, quantum
cryptography~\cite{BB84}, tomography~\cite{DDPPS02}, communication~\cite{Dal11},
witnessing~\cite{GBHA10, DPGA12}, private decoupling~\cite{BGK09, Bus09},
purification of noisy measurements~\cite{DDS10}, and error
correction~\cite{Bus08, CDDMP10}, where the efficiency -- or the success itself
-- of the protocol depends on the generated input-output statistics.

In this paper we address the problem of quantifying the maximal amount of
information that can be extracted from a quantum ensemble, or by a quantum
measurement. The former problem, known as the {\em accessible information
  problem}, was introduced~\cite{LL66, Hol73, Bel75a, Bel75b, Dav78, JRW94}
almost half a century ago, while the latter, known as the {\em informational
  power problem}, is much younger~\cite{DDS11, OCMB11, Hol12, Hol13, SS14,
  DBO14, Szy14}.

We will focus on a relevant class of ensembles and measurements, known as
spherical quantum $2$-designs~\cite{AE07}, whose distinctive feature is to share
many properties with the uniform distribution. Relevant examples of $2$-designs
are so-called symmetric, informationally complete (SIC)
measurements~\cite{Zau99, SG10}, and maximal sets of mutually unbiased bases
(MUBs)~\cite{KR05, BWB10}. They play a crucial role in our understanding of the
state space~\cite{FS03, Fuc04}, in quantum Bayesianism~\cite{FS09, FS11, AEF11,
  Fuc12}, and are key ingredients in many of the abovementioned quantum
information processing protocols.

We derive lower bounds on the entropy of the input distribution of $2$-design
ensembles and of the output distribution of $2$-design measurements. From these
results, we derive upper bounds on the accessible information and the
informational power of $2$-design ensembles and measurements, as a function of
the dimension of the system only. As a consequence, we show that, perhaps
surprisingly, the statistics generated by $2$-designs, although optimal for the
abovementioned protocols, never contains more than one bit of information. As
particular cases, we provide the accessible information and informational power
of SIC and MUB ensembles and POVMs (analytically for dimensions two and three,
numerically otherwise). Finally, we extend our results to generalizations of
SICs and MUBs with arbitrary rank.

The paper is structured as follows. Preliminary concepts are summarized in
Sec.~\ref{sec:formalism}. Lower bounds on the entropy of $2$-designs are derived
in Sec.~\ref{sec:entropybound}, from which bounds on accessible information and
informational power of $2$-designs are derived in Sec.~\ref{sec:infobound}. Our
results are specialized to the cases of SICs and MUBs in Sec.~\ref{sec:sicmub},
and generalized to the arbitrary-rank case in Sec.~\ref{sec:gensicmub}. We
conclude the paper discussing some open problems in Sec.~\ref{sec:conclusion}.

\section{Formalism}
\label{sec:formalism}

Let us recall some basic facts~\cite{NC00} from quantum information theory. Any
quantum system is associated with an Hilbert space $\hilb{H}$, and we denote
with $L(\hilb{H})$ the space of linear operators on $\hilb{H}$. We consider only
finite dimensional Hilbert spaces. A {\em quantum state} $\rho$ is a positive
semidefinite operator in $L(\hilb{H})$ such that $\Tr[\rho] \le 1$. Any
preparation of a quantum system is described by an {\em ensemble}, namely an
operator-valued measurable function $E = \{ \rho_x \}$ from real numbers $x$ to
states $\rho_x \in L(\hilb{H})$, such that $\sum_x\Tr[\rho_x] = 1$.

A {\em quantum effect} $\Pi$ is a positive semidefinite operator in
$L(\hilb{H})$ such that $\Pi \le \openone$. Any measurement on a quantum system
is described by a {\em positive-operator valued measure} (POVM), namely an
operator-valued measurable function $P = \{ \Pi_y \}$ from real numbers $y$ to
effects $\Pi_y \in L(\hilb{H})$, such that $\sum_y \Pi_y = \openone$, where
$\openone$ denotes the identity operator. Given an ensemble $E = \{ \rho_x \}$
and a POVM $P = \{ \Pi_y \}$, the joint probability $p_{x,y}$ of state $\rho_x$
and outcome $\Pi_y$ is given by the Born rule, namely $p_{x,y} = \Tr[\rho_x
  \Pi_y]$.

\subsection{Accessible information and informational power}

Let us recall some basic definitions~\cite{Cov06} from classical information
theory. A random variable $X$ is a function that maps from its domain, the
sample space, to its range, the real numbers, according to a probability
distribution $p_x$. Given a random variable $X$, its {\em Shannon entropy}
$H(X)$ defined as
\begin{align*}
  H(X) := - \sum_x p_x \log p_x,
\end{align*}
is a measure of the lack of information about the outcome of $X$. We write
$\log$ for binary logarithms and $\ln$ for natural logarithms, and we express
informational quantities in bits.

Given two random variables $X$ and $Y$, their {\em joint Shannon entropy}
$H(X,Y)$ is defined as
\begin{align*}
  H(X,Y) := - \sum_{x,y} p_{x,y} \log p_{x,y},
\end{align*}
is a measure of the lack of information about the joint outcomes of $X$ and
$Y$. The {\em conditional Shannon entropy} $H(Y|X)$ defined as $H(Y|X) := H(X,Y)
- H(X)$ is a measure of the lack of information about the outcome of $X$ given
the knowledge of the outcome of $Y$. The {\em mutual information} $I(X;Y)$
defined as $I(X;Y) := H(X) + H(Y) - H(X,Y) = H(Y) - H(Y|X) = H(X) - H(X|Y)$ is a
measure of of how much information about each random variable is carried by the
other one. Given an ensemble $E = \{ \rho_x \}$ and a POVM $P = \{ \Pi_y \}$, we
denote with $I(E,P)$ the mutual information $I(X;Y)$ between random variables
$X$ and $Y$ distributed according to $p_{x,y} = \Tr[\rho_x \Pi_y]$.

The accessible information~\cite{LL66, Hol73, Bel75a, Bel75b} is a measure of
how much information can be extracted from an ensemble.

\begin{dfn}[Accessible information]\label{def:accinfo}
  The accessible information $A(E)$ of an ensemble $E$ is the supremum over any
  POVM $P$ of the mutual information $I(E, P)$, namely
  \begin{align*}
    A(E) := \sup_{P} I(E, P).
  \end{align*}
\end{dfn}

The informational power~\cite{DDS11} is a measure of how much information can be
extracted by a POVM.

\begin{dfn}[Informational power]\label{def:infopower}
  The informational power $W(P)$ of a POVM $P$ is the supremum over any ensemble
  $E$ of the mutual information $I(E, P)$, namely
  \begin{align*}
    W(P) := \sup_{E} I(E, P).
  \end{align*}
\end{dfn}

We recall some results about accessible information and informational power that
will be useful in the following.

\begin{lmm}\label{lmm:duality}
  For any POVM $P = \{ \Pi_y \}$, the informational power $W(P)$ is the supremum
  over normalized states $\rho$ of the accessible information of the ensemble
  $\{ \rho^{1/2} \Pi_y \rho^{1/2} \}$, namely
  \begin{align}\label{eq:duality}
     W(P) = \sup_{\rho} A(\{ \rho^{1/2} \Pi_y \rho^{1/2} \}).
  \end{align}
\end{lmm}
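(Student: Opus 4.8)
The plan is to unfold both sides of \eqref{eq:duality} into a single optimization over pairs (state, POVM) and show the two expressions coincide term by term. First I would fix a POVM $P = \{\Pi_y\}$ and recall that $W(P) = \sup_E I(E,P)$, where the supremum runs over all ensembles $E = \{\sigma_x\}$ with $\sum_x \Tr[\sigma_x] = 1$. By the concavity of mutual information in the input ensemble (for fixed channel/POVM) and a standard purification/decomposition argument, it suffices to restrict attention to ensembles of the form $\sigma_x = q_x \ket{\psi_x}\bra{\psi_x}$ with pure states; more importantly, the key reduction is to note that $\sum_x \sigma_x =: \rho$ is a normalized state (the average state of the ensemble), and $I(E,P)$ depends on the ensemble only in a way we can reorganize around this $\rho$.

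The central computation is the following rewriting. Given the average state $\rho = \sum_x \sigma_x$, write $\sigma_x = \rho^{1/2} \tau_x \rho^{1/2}$ where $\tau_x = \rho^{-1/2}\sigma_x\rho^{-1/2}$ (on the support of $\rho$; the kernel contributes nothing), so that $\{\tau_x\}$ is itself an ensemble with $\sum_x \tau_x$ equal to the projector onto the support of $\rho$. Then the Born-rule joint distribution is
\begin{align*}
  p_{x,y} = \Tr[\sigma_x \Pi_y] = \Tr[\rho^{1/2}\tau_x\rho^{1/2}\Pi_y] = \Tr[\tau_x\, \rho^{1/2}\Pi_y\rho^{1/2}].
\end{align*}
Thus the joint statistics of the ensemble $E = \{\sigma_x\}$ against the POVM $P = \{\Pi_y\}$ are identical to the joint statistics of the ensemble $\{\tau_x\}$ against the ensemble $\{\rho^{1/2}\Pi_y\rho^{1/2}\}$ read ``in the other direction.'' Hence $I(E,P) = I(X;Y)$ for the distribution $p_{x,y} = \Tr[\tau_x\,\rho^{1/2}\Pi_y\rho^{1/2}]$, and optimizing over all ensembles $\{\tau_x\}$ (equivalently, over all POVMs applied to the ensemble $\{\rho^{1/2}\Pi_y\rho^{1/2}\}$, using the symmetry of mutual information and Definition~\ref{def:accinfo}) gives exactly $A(\{\rho^{1/2}\Pi_y\rho^{1/2}\})$. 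Taking the remaining supremum over the average state $\rho$ yields the claim; one checks that allowing $\rho$ to range over all normalized states rather than only full-rank ones costs nothing by a continuity/limiting argument.

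The main obstacle I expect is bookkeeping around the support of $\rho$ and the normalization constraints: one must verify that $\{\rho^{1/2}\Pi_y\rho^{1/2}\}$ is a legitimate ensemble (the operators are positive semidefinite and $\sum_y \rho^{1/2}\Pi_y\rho^{1/2} = \rho^{1/2}\openone\rho^{1/2} = \rho$ has trace $1$), and that the map $\{\sigma_x\} \leftrightarrow (\rho, \{\tau_x\})$ is a genuine bijection between ensembles with average state $\rho$ and POVMs, so that no optimization is lost on either side. The concavity argument reducing $W(P)$ to ensembles with a given average state, and the symmetry $I(X;Y) = I(Y;X)$ that lets us reinterpret ``the best POVM on $\{\rho^{1/2}\Pi_y\rho^{1/2}\}$'' as ``the best ensemble decomposition'' feeding into $A$, are the two conceptual pivots; the rest is the elementary algebra displayed above.
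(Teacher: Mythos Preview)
The paper does not supply its own proof of this lemma; it simply refers the reader to Refs.~\cite{DDS11, DBO14}. Your proposal reconstructs essentially the argument found in those references: reparametrize every ensemble by its average state $\rho$ together with a POVM $\{\tau_x\}$ on $\operatorname{supp}\rho$ via $\sigma_x = \rho^{1/2}\tau_x\rho^{1/2}$, then use cyclicity of the trace to swap the roles of ensemble and measurement, obtaining $\sup_E I(E,P) = \sup_\rho \sup_Q I(\{\rho^{1/2}\Pi_y\rho^{1/2}\}, Q) = \sup_\rho A(\{\rho^{1/2}\Pi_y\rho^{1/2}\})$. That is the correct and standard route, and your bookkeeping remarks about supports and normalization are exactly the points that need checking.

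One minor correction: the appeal to ``concavity of mutual information in the input ensemble'' is both inaccurate (mutual information is not concave in the ensemble in a sense useful here) and unnecessary. No convexity argument is needed to ``reduce $W(P)$ to ensembles with a given average state'': every ensemble already has an average state, and the map $E \leftrightarrow (\rho,\{\tau_x\})$ is a straight bijection, not a variational reduction. Likewise, the restriction to pure-state ensembles plays no role in this lemma. I would simply delete those remarks; the algebraic core of your argument stands on its own.
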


\begin{proof}
  See Refs.~\cite{DDS11, DBO14}.
\end{proof}

\begin{lmm}\label{lmm:scrooge}
  For any ensemble $E$ of pure states and for any POVM $P$ with
  rank-one elements, the accessible information $A(E)$ and the
  informational power $W(P)$ are bounded as follows:
  \begin{align}\label{eq:scrooge}
    0 \le A(E) \le \log d.\\
    \log d - \frac1{\ln 2} \sum_{n=2}^d \frac1n \le W(P) \le
    \log d.
  \end{align}
\end{lmm}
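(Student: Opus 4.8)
The plan is to treat the two inequalities separately, and in each case to recognize that the extreme values are attained by (or bounded via) the uniform ensemble or the uniform POVM, together with the known exact value of the accessible information of the ``Scrooge'' ensemble.

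For the accessible information $A(E)$: the lower bound $0 \le A(E)$ is immediate, since mutual information is nonnegative and the supremum over $P$ in Definition~\ref{def:accinfo} can only increase it; taking $P = \{\openone\}$ already gives $I(E,P) = 0$. For the upper bound $A(E) \le \log d$, I would argue that for any POVM $P$ the mutual information satisfies $I(E,P) = H(Y) - H(Y|X) \le H(Y) \le \log|Y|$, but this does not directly give $\log d$ since $P$ may have many outcomes. The cleaner route is to use $I(E,P) = H(X) - H(X|Y)$: the Holevo-type argument, or simply the fact that the accessible information of any ensemble of pure states in dimension $d$ cannot exceed $\log d$ because the Holevo quantity $\chi(E) \le S(\sum_x p_x \rho_x) \le \log d$ and $A(E)\le\chi(E)$. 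I would cite the Holevo bound here (available in~\cite{NC00}) rather than reprove it.

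For the informational power $W(P)$: the upper bound $W(P) \le \log d$ follows from Lemma~\ref{lmm:duality} combined with the bound just established, since for each normalized $\rho$ the ensemble $\{\rho^{1/2}\Pi_y\rho^{1/2}\}$ consists of (subnormalized) rank-one states when $P$ has rank-one elements, so its accessible information is at most $\log d$; taking the supremum over $\rho$ preserves the bound. The lower bound $\log d - \frac{1}{\ln 2}\sum_{n=2}^d \frac1n \le W(P)$ is the substantive part: here I would choose in Lemma~\ref{lmm:duality} the specific state $\rho = \openone/d$, so that $W(P) \ge A(\{\Pi_y/d\})$. Writing $\Pi_y = d\, q_y\, \ketbra{\psi_y}{\psi_y}$ with $\sum_y q_y = 1$ (using $\sum_y \Pi_y = \openone$ and rank-one-ness), the ensemble $\{\Pi_y/d\} = \{q_y \ketbra{\psi_y}{\psi_y}\}$ is a generic pure-state ensemble with average state $\openone/d$. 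The accessible information of any such ``uniformly averaged'' pure-state ensemble is bounded below by that of the Scrooge ensemble, whose accessible information is known exactly to be $\log d - \frac{1}{\ln 2}\sum_{n=2}^d \frac1n$ (Jozsa--Robb--Wootters / Davies). I would invoke that result, or alternatively lower-bound $A(\{q_y\ketbra{\psi_y}{\psi_y}\})$ directly by evaluating $I(E,P)$ for the ``pretty good'' or square-root measurement.

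The main obstacle I anticipate is the lower bound on $W(P)$: one must be careful that $A(\{\Pi_y/d\})$ is genuinely bounded below by the Scrooge value for an \emph{arbitrary} rank-one POVM, not just for nice symmetric ones. The clean way is to recall Davies' theorem, which states that among all pure-state ensembles with a fixed average state $\sigma$, the Scrooge ensemble associated with $\sigma$ minimizes the accessible information; specializing to $\sigma = \openone/d$ gives exactly the stated bound, since then the minimizing accessible information evaluates to $\log d - \frac{1}{\ln 2}\sum_{n=2}^d \frac1n$. Everything else is bookkeeping with the Born rule and the normalization constraints.
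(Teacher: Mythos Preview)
Your proposal is correct and in fact supplies more than the paper does: the paper's own proof is simply ``See Refs.~\cite{Hol73, JRW94, DBO14}'' with no further argument. Your route---Holevo's $\chi$-bound for the upper bounds, Lemma~\ref{lmm:duality} with $\rho=\openone/d$ to reduce the lower bound on $W(P)$ to the accessible information of the pure-state ensemble $\{\Pi_y/d\}$ with average $\openone/d$, and then the Jozsa--Robb--Wootters subentropy inequality to bound that from below by $\log d - \frac{1}{\ln 2}\sum_{n=2}^d \frac1n$---is precisely the content of those references, so there is no discrepancy to discuss.

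One small correction of attribution: the statement that the Scrooge ensemble minimizes accessible information among all pure-state ensembles with a fixed average state is the main result of Jozsa, Robb, and Wootters~\cite{JRW94}, not of Davies. Davies' theorem~\cite{Dav78}, which the paper invokes elsewhere, is the distinct statement that the optimal POVM for accessible information may be taken with rank-one elements and at most $d^2$ outcomes. You already cite JRW correctly earlier in your sketch, so this is only a slip in the final paragraph and does not affect the validity of the argument.
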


\begin{proof}
  See Refs.~\cite{Hol73, JRW94, DBO14}.
\end{proof}

\subsection{Spherical quantum $2$-designs}

A spherical quantum $t$-design is a discrete probability distribution over
quantum states that shares some properties with the uniform distribution.

\begin{dfn}[Spherical quantum $t$-design]\label{def:2design}
  A spherical quantum $t$-design $\{ p_x, \ket{\phi_x} \}_{x=1}^N$ is a
  probability distribution $p_x$ over normalized pure states $\ket{\phi_x}$ such
  that
  \begin{align}\label{eq:2design}
    \sum_{x=1}^N p_x \left( \ket{\phi_x}\bra{\phi_x} \right)^{\otimes
      s} = \int \left( \ket{\phi}\bra{\phi} \right)^{\otimes s} d \phi
  \end{align}
  holds for any $s \le t$, where the integral is over the Haar
  measure.
\end{dfn}

We recall some results about quantum $t$-designs that will be useful in the
following.

\begin{lmm}\label{lmm:avgstate}
  The integral over Haar measure in Eq.~\eqref{eq:2design} is given by
  \begin{align}\label{eq:avgstate}
    \int \left( \ket{\phi}\bra{\phi} \right)^{\otimes s} d \phi =
    \frac1M P_{\textrm{sym}},
  \end{align}
  where $P_{\textrm{sym}}$ is the projector over the symmetric
  subspace and $M = \binom{s+d-1}{s}$ is its norm.
\end{lmm}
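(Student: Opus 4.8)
\emph{Proof proposal.} Write $R_s := \int \left( \ket{\phi}\bra{\phi} \right)^{\otimes s} d\phi$ for the operator on the left-hand side of Eq.~\eqref{eq:avgstate}. The plan is to pin down $R_s$ by a symmetry argument and then fix the one remaining scalar by taking a trace. First I would observe that each integrand $\left(\ket{\phi}\bra{\phi}\right)^{\otimes s}$ is supported on the totally symmetric subspace of $\hilb{H}^{\otimes s}$, hence so is $R_s$; equivalently $P_{\textrm{sym}} R_s P_{\textrm{sym}} = R_s$. Next I would use left-invariance of the Haar measure: for any unitary $U$ on $\hilb{H}$, the substitution $\ket{\phi} \mapsto U\ket{\phi}$ leaves the integral unchanged, so $U^{\otimes s} R_s (U^{\otimes s})^\dagger = R_s$, i.e. $R_s$ commutes with $U^{\otimes s}$ for all $U$.

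The key step is then to invoke the fact that the representation $U \mapsto U^{\otimes s}$ of the unitary group, restricted to the symmetric subspace, is irreducible. Combining this with the previous two observations and Schur's lemma, $R_s$ must act as a scalar on the symmetric subspace and as zero on its complement, so $R_s = c\, P_{\textrm{sym}}$ for some constant $c \ge 0$. Finally I would determine $c$ by taking the trace: on one hand $\Tr[R_s] = \int \Tr\!\left[\left(\ket{\phi}\bra{\phi}\right)^{\otimes s}\right] d\phi = \int 1 \, d\phi = 1$, since $\ket{\phi}$ is normalized and the Haar measure is a probability measure; on the other hand $\Tr[c\,P_{\textrm{sym}}] = c\,\dim P_{\textrm{sym}} = c\binom{s+d-1}{s} = cM$. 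Hence $c = 1/M$, which is the claimed identity, and incidentally this also records that $M = \binom{s+d-1}{s}$ is indeed the dimension (``norm'' in the sense of $\Tr$) of the symmetric-subspace projector.

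I expect the main obstacle to be the irreducibility of the symmetric-subspace representation of $U(d)$, which is the one nontrivial ingredient; it is standard (a consequence of Schur--Weyl duality, since the commutant of $U^{\otimes s}$ is spanned by the permutation operators $W_\pi$, $\pi \in S_s$, all of which act as the identity on $P_{\textrm{sym}}$), so I would either cite it or sketch this commutant argument. An alternative route avoiding Schur's lemma is to start from the explicit expansion $P_{\textrm{sym}} = \frac{1}{s!}\sum_{\pi\in S_s} W_\pi$ and evaluate $\int \left(\ket{\phi}\bra{\phi}\right)^{\otimes s} d\phi$ directly against each $W_\pi$, but the symmetry-plus-trace argument above is cleaner and more transparent, so that is the one I would present.
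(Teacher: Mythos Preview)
Your argument is correct and is in fact the standard way to establish Eq.~\eqref{eq:avgstate}: support on the symmetric subspace, unitary invariance, Schur's lemma on the irreducible $U^{\otimes s}$-action on $P_{\textrm{sym}}\hilb{H}^{\otimes s}$, and a trace to fix the scalar. The only point to be careful about is the irreducibility input, which you correctly flag and for which the Schur--Weyl commutant argument you sketch is perfectly adequate.

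By way of comparison, the paper does not actually prove this lemma; its entire proof is a citation to Ref.~\cite{AE07}. So your proposal is not so much a different route as a self-contained derivation where the paper simply defers to the literature. What you gain is that the reader sees exactly which structural facts (symmetry of the integrand, Haar invariance, irreducibility of the symmetric representation) are doing the work, at the modest cost of a paragraph that the paper chose to outsource.
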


\begin{proof}
  See Ref.~\cite{AE07}.
\end{proof}

A $t$-design is called uniformly distributed (or unweighted) if $p_x = 1/N$ for
all $x$. In this work we will focus on $2$-designs. An ensemble $E = \{ \rho_x
\}$ of pure states is a {\em $2$-design ensemble} if $\{ p_x, \ket{\phi_x} \}$
is a $2$-design, with $p_x := \Tr[\rho_x]$ and $\ket{\phi_x} \bra{\phi_x} :=
\rho_x / \Tr[\rho_x]$. Notice that upon setting $s = 1$ in
Eqs.~\eqref{eq:2design} and~\eqref{eq:avgstate}, it follows that the average
state of any $2$-design ensemble is $\openone/d$. This allows us to define
$2$-design POVMs as follows. A POVM $P = \{ \Pi_y \}$ with rank-one elements is
a {\em $2$-design POVM} if $\{ q_y, \ket{\pi_y} \}$ is a $2$-design, with $q_y
:= \Tr[\Pi_y]/d$ and $\ket{\pi_y} \bra{\pi_y} := \Pi_y / \Tr[\Pi_y]$.

Noticeable examples of uniformly-distributed $2$-designs are symmetric
informationally complete (SIC) sets~\cite{Zau99}, for which $N = d^2$, and $d+1$
mutually unbiased bases (MUBs)~\cite{KR05}, for which $N = d(d+1)$.

\begin{dfn}[SIC]\label{dfn:sic}
  A $d$-dimensional SIC set $\{ p_x, \ket{\phi_x} \}_{x=1}^{d^2}$ is a uniform
  probability distribution $p_x = 1/d^2$ over normalized pure states
  $\ket{\phi_x}$ such that
  \begin{align}\label{eq:sic}
    |\braket{\phi_x}{\phi_y}|^2 = \frac{d\delta_{x,y}+1}{d+1}.
  \end{align}
\end{dfn}

\begin{dfn}[MUB]\label{dfn:mub}
  A $d$-dimensional $(d+1)$-MUB $\{ p_{b,x}, \ket{\phi_{b,x}} \}_{b=1,
    x=1}^{d+1,d}$ is a uniform probability distribution $p_{b,x} = 1/(d(d+1))$
  over normalized pure states $\ket{\phi_{b, x}}$ such that
  \begin{align}\label{eq:mub}
    |\braket{\phi_{b,x}}{\phi_{b',x'}}|^2 = \delta_{b,b'}
    \delta_{x,x'} + \frac1d (1 - \delta_{b,b'}).
  \end{align}
\end{dfn}

\section{Entropic bounds for $2$-designs}
\label{sec:entropybound}

In previous literature, entropic bounds for SICs and MUBs were
discussed~\cite{Ras13}. We generalize those results by providing bounds for
arbitrary uniformly-distributed $2$-designs.

\begin{thm}\label{thm:ensemblebound}
  The Shannon entropy $H(E, \Pi)$ of any $2$-design ensemble $E = \{ \rho_x
  \}_{x=1}^N$ uniformly distributed (namely, $\Tr[\rho_x] = 1/N$ for all $x$)
  with respect to effect $\Pi$ is bounded as follows:
  \begin{align}\label{eq:ensemblebound}
    H(E, \Pi) \ge \log \left( \frac{N(d+1)}{d} \frac{\Tr[\Pi]^2}{\Tr[\Pi]^2 +
      \Tr[\Pi^2]} \right).
  \end{align}
\end{thm}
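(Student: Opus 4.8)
The plan is to compute the probability distribution $p_x = \Tr[\rho_x \Pi]$ associated with the ensemble $E$ and the effect $\Pi$, and then bound its Shannon entropy from below by exploiting the $2$-design property to control its second moment $\sum_x p_x^2$. Since $E$ is uniformly distributed, write $\rho_x = \ket{\phi_x}\bra{\phi_x}/N$, so that $p_x = \bra{\phi_x} \Pi \ket{\phi_x}/N$. The total probability is $\sum_x p_x = \Tr[\Pi \, \openone/d] = \Tr[\Pi]/d$, using the $s=1$ consequence of Def.~\ref{def:2design} and Lemma~\ref{lmm:avgstate}; note this need not equal $1$, so $H(E,\Pi)$ here denotes the entropy of the (sub-normalized) distribution, and the bound must be stated accordingly.

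The key step is to evaluate $\sum_x p_x^2 = \frac1{N^2} \sum_x \left( \bra{\phi_x} \Pi \ket{\phi_x} \right)^2 = \frac1{N^2} \sum_x \Tr\!\left[ (\ket{\phi_x}\bra{\phi_x})^{\otimes 2} \, \Pi^{\otimes 2} \right]$. Applying the $s=2$ case of the $2$-design condition~\eqref{eq:2design} together with Lemma~\ref{lmm:avgstate}, this becomes $\frac1{N^2 M} \Tr\!\left[ P_{\textrm{sym}} \, \Pi^{\otimes 2} \right]$ with $M = \binom{d+1}{2} = d(d+1)/2$. Using $P_{\textrm{sym}} = \tfrac12(\openone + S)$ where $S$ is the swap operator, and the identities $\Tr[\Pi^{\otimes 2}] = \Tr[\Pi]^2$ and $\Tr[S\,\Pi^{\otimes 2}] = \Tr[\Pi^2]$, one gets $\sum_x p_x^2 = \frac{\Tr[\Pi]^2 + \Tr[\Pi^2]}{N^2 d (d+1)}$.

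Finally I would convert this second-moment bound into an entropy bound. The elementary inequality $H(\{p_x\}) \ge -\log\!\big( \sum_x p_x^2 \big/ \sum_x p_x\big)$ — which follows from concavity of $-\log$ applied to the normalized distribution $p_x/\sum_x p_x$, or equivalently from the fact that the Rényi-$2$ entropy lower-bounds the Shannon entropy — gives
\begin{align*}
  H(E,\Pi) \ge -\log\!\left( \frac{\Tr[\Pi]^2 + \Tr[\Pi^2]}{N^2 d(d+1)} \cdot \frac{d}{\Tr[\Pi]} \right),
\end{align*}
which rearranges into~\eqref{eq:ensemblebound}. (One should double-check the normalization convention: if $H$ is defined so that the $\sum_x p_x = \Tr[\Pi]/d$ factor enters differently, the collision-entropy step must be adjusted, but the $2$-design moment computation is unaffected.) I expect the main obstacle to be bookkeeping rather than conceptual: correctly handling the sub-normalization of $\{p_x\}$ throughout, and choosing the precise form of the collision-entropy inequality so that the algebra collapses to exactly the stated expression. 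The genuinely substantive input is the $s=2$ design identity, which is what makes $\sum_x p_x^2$ computable in closed form independently of the particular $2$-design.
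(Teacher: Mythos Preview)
Your approach is the same as the paper's: compute the collision probability via the $s=2$ design identity and then apply the R\'enyi-$2$ (equivalently, Jensen) bound $H\ge -\log\sum_x q_x^2$. The paper avoids your normalization worry by working from the outset with the \emph{normalized} conditional distribution $p_{x|\Pi}=d\,\bra{\phi_x}\Pi\ket{\phi_x}/(N\,\Tr[\Pi])$, so that $\sum_x p_{x|\Pi}=1$ and the standard inequality applies directly with no sub-normalization bookkeeping.

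Two concrete slips in your displayed formulas (both of the bookkeeping kind you anticipated): (i) the design condition~\eqref{eq:2design} already carries the weight $1/N$, so $\sum_x p_x^2=\frac{1}{NM}\Tr[P_{\textrm{sym}}\Pi^{\otimes 2}]=(\Tr[\Pi]^2+\Tr[\Pi^2])/(Nd(d+1))$, not with $N^2$ in the denominator; and (ii) the collision-entropy bound for a distribution with total mass $c=\sum_x p_x$ reads $H\ge -\log\big(\sum_x p_x^2/c^{2}\big)$, not $/c$. With both fixed the algebra collapses exactly to~\eqref{eq:ensemblebound}.
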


\begin{proof}
  Let $\ket{\phi_x} \bra{\phi_x} := N \rho_x$. By setting $s = 2$ in
  Eqs.~\eqref{eq:2design} and~\eqref{eq:avgstate} it follows that
  \begin{align*}
    \sum_x \frac1N (\ket{\phi_x}\bra{\phi_x})^{\otimes 2} = \frac{\openone +
      S}{d(d+1)}.
  \end{align*}
  where we used the fact that the projector on the symmetric subspace of
  $\hilb{H}^{\otimes 2}$ is $P_{\textrm{sym}} = \frac12 (\openone + S)$ and $S$
  is the swap operator.
  
  Multiplying both sides by $\frac{d^2}{N} \frac{\Pi^{\otimes 2}}{\Tr[\Pi]^2}$
  and taking the trace we get
  \begin{align}\label{eq:coincidence1}
    \sum_x \frac{d^2}{N^2} \frac{(\bra{\phi_x} \Pi \ket{\phi_x})^2}{\Tr[\Pi]^2}
    = \frac{d}{N(d+1)} \frac{\Tr[\Pi]^2+\Tr[\Pi^2]}{\Tr[\Pi]^2},
  \end{align}
  where we used the fact that $\Tr[\Pi^{\otimes 2} S] = \Tr[\Pi^2]$ for any
  effect $\Pi$.

  Since the probability of state $\rho_x$ given effect $\Pi$ is given by
  \begin{align*}
    p_{x|\Pi} = \frac{d}{N} \frac{\bra{\phi_x} \Pi \ket{\phi_x}}{\Tr[\Pi]},
  \end{align*}
  the negative logarithm of the left-hand side of Eq.~\eqref{eq:coincidence1}
  can be upper bounded by means of Jensen's inequality as follows:
  \begin{align*}
    -\log \sum_x p_{x|\Pi}^2 \le -\sum_x p_{x|\Pi} \log p_{x|\Pi}.
  \end{align*}

  The right-hand side is the Shannon entropy $H(E, \Pi)$ of ensemble $E$ with
  respect to effect $\Pi$, so the statement follows.
\end{proof}

\begin{thm}\label{thm:povmbound}
  The Shannon entropy $H(P, \rho)$ of any $2$-design POVM $P = \{ \Pi_y
  \}_{y=1}^N$ uniformly distributed (namely, $\Tr[\Pi_y] = d/N$ for all $y$)
  with respect to state $\rho$ is bounded as follows:
  \begin{align}\label{eq:povmbound}
    H(P, \rho) \ge \log \left( \frac{N(d+1)}{d} \frac{\Tr[\rho]^2}{\Tr[\rho]^2 +
      \Tr[\rho^2]} \right).
  \end{align}
\end{thm}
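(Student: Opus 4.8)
The plan is to mirror the proof of Theorem~\ref{thm:ensemblebound}, simply exchanging the roles played by the ensemble states and the POVM effects. First I would pass from the POVM $P = \{ \Pi_y \}$ to its underlying $2$-design: by the definition of a $2$-design POVM, the vectors $\ket{\pi_y}$ with $\ket{\pi_y}\bra{\pi_y} := (N/d)\Pi_y$, together with the weights $q_y := \Tr[\Pi_y]/d = 1/N$, form a uniformly distributed $2$-design. Setting $s = 2$ in Eqs.~\eqref{eq:2design} and~\eqref{eq:avgstate}, and using $P_{\textrm{sym}} = \frac12(\openone + S)$ on $\hilb{H}^{\otimes 2}$ with $S$ the swap operator, I obtain
\begin{align*}
  \sum_y \frac1N (\ket{\pi_y}\bra{\pi_y})^{\otimes 2} = \frac{\openone + S}{d(d+1)}.
\end{align*}

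Next I would multiply both sides by $\frac{d^2}{N}\frac{\rho^{\otimes 2}}{\Tr[\rho]^2}$ and take the trace, using $\Tr[\rho^{\otimes 2} S] = \Tr[\rho^2]$. The right-hand side becomes $\frac{d}{N(d+1)}\frac{\Tr[\rho]^2 + \Tr[\rho^2]}{\Tr[\rho]^2}$, while the $y$-th term on the left-hand side is $\frac{d^2}{N^2}\frac{(\bra{\pi_y}\rho\ket{\pi_y})^2}{\Tr[\rho]^2}$. The one point that needs care is identifying this with the output distribution: since $\sum_y \Pi_y = \openone$, the probability of outcome $\Pi_y$ given state $\rho$ is $p_{y|\rho} = \Tr[\rho\Pi_y]/\Tr[\rho] = \frac{d}{N}\frac{\bra{\pi_y}\rho\ket{\pi_y}}{\Tr[\rho]}$, so the left-hand side is exactly $\sum_y p_{y|\rho}^2$, the index of coincidence of the output distribution.

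Finally I would apply Jensen's inequality to the convex function $-\log$, giving $-\log \sum_y p_{y|\rho}^2 \le -\sum_y p_{y|\rho}\log p_{y|\rho} = H(P,\rho)$, and combine it with the evaluation of the right-hand side to get
\begin{align*}
  H(P,\rho) \ge -\log\left( \frac{d}{N(d+1)}\frac{\Tr[\rho]^2+\Tr[\rho^2]}{\Tr[\rho]^2} \right),
\end{align*}
which is Eq.~\eqref{eq:povmbound}. I do not expect a genuine obstacle: the argument is the exact dual of the one for Theorem~\ref{thm:ensemblebound}, the only thing requiring attention being the bookkeeping of the normalizations $q_y = 1/N$ and $\Tr[\Pi_y] = d/N$ built into the definition of a $2$-design POVM, which is precisely what makes the two bounds come out with the identical functional form under the substitution $\Pi \leftrightarrow \rho$.
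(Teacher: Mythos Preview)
Your proposal is correct and follows essentially the same argument as the paper's own proof: both pass to the underlying $2$-design $\{1/N,\ket{\pi_y}\}$, apply the $s=2$ identity $\sum_y \frac1N (\ket{\pi_y}\bra{\pi_y})^{\otimes 2} = (\openone+S)/(d(d+1))$, multiply by $\frac{d^2}{N}\rho^{\otimes 2}/\Tr[\rho]^2$ and trace to compute the index of coincidence of the output distribution, then invoke Jensen's inequality. Your bookkeeping of the normalizations is in fact slightly cleaner than the paper's, which contains a couple of typographical slips in the displayed formulas.
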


\begin{proof}
  Let $\ket{\pi_y} \bra{\pi_y} := N/d \Pi_y$. By setting $s = 2$ in
  Eqs.~\eqref{eq:2design} and~\eqref{eq:avgstate} it follows that
  \begin{align*}
    \sum_y \frac1N (\ket{\pi_y}\bra{\pi_y})^{\otimes 2} = \frac{\openone +
      S}{d(d+1)}.
  \end{align*}
  where we used the fact that the projector on the symmetric subspace of
  $\hilb{H}^{\otimes 2}$ is $P_{\textrm{sym}} = \frac12 (\openone + S)$ and $S$
  is the swap operator.
  
  Multiplying both sides by $\frac{d^2}{N} \frac{\rho^{\otimes 2}}{\Tr[\rho]^2}$
  and taking the trace we get
  \begin{align}\label{eq:coincidence2}
    \sum_y \frac{d^2}{N^2} \frac{(\bra{\pi_y} \rho \ket{\pi_y})^2}{\Tr[\rho]^2}
    = \frac{d}{N(d+1)}\frac{\Tr[\rho]^2+\Tr[\rho^2]}{\Tr[\rho^2]},
  \end{align}
  where we used the fact that $\Tr[\rho^{\otimes 2} S] = \Tr[\rho^2]$ for any
  state $\rho$.

  Since the probability of outcome $\Pi_y$ given state $\rho$ is given by
  \begin{align*}
    q_{y|\rho} = \frac{q}{N} \frac{\bra{\pi_y} \rho \ket{\pi_y}}{\Tr[\rho]},
  \end{align*}
  the negative logarithm of the left-hand side of Eq.~\eqref{eq:coincidence2}
  can be upper bounded by means of Jensen's inequality as follows:
  \begin{align*}
   -\log \sum_y q_{y|\rho}^2 \le -\sum_y q_{y|\rho} \log q_{y|\rho}.
  \end{align*}

  The right-hand side is the Shannon entropy $H(P, \rho)$ of POVM $P$ with
  respect to state $\rho$, so the statement follows.
\end{proof}

By direct inspection it follows that state- and effect-dependent lower bounds in
Eqs.~\eqref{eq:ensemblebound} and~\eqref{eq:povmbound} are independent of the
norms of $\Pi$ and $\rho$, so without loss of generality we can set $\Tr[\rho] =
\Tr[\Pi] = 1$. Furthermore, those bounds can be made state- and
effect-independent by minimizing the right hand side of
Eqs.~\eqref{eq:ensemblebound} and~\eqref{eq:povmbound} over $\Pi$ and $\rho$,
respectively, with minimum achieved when $\Pi$ and $\rho$ are rank-one
(namely, $\Tr[\rho^2] = \Tr[\Pi^2] = 1$).

\section{Accessible information and informational power of $2$-designs}
\label{sec:infobound}

In previous literature, the accessible information and the informational power
of SICs were derived for dimension $2$~\cite{DDS11, OCMB11, SS14} and
$3$~\cite{DBO14, Szy14}, and tight bounds were provided for any
dimension~\cite{DBO14}. Bounds are also known for maximal sets of
MUBs~\cite{San95, DHLST04}. In this section we generalize those results by
providing bounds for $2$-designs in any dimension.

Upper bounds on the accessible information and informational power of
uniformly-distributed $2$-designs can be derived from
Theorems~\ref{thm:ensemblebound} and~\ref{thm:povmbound}. However, in order to
bound the accessible information of $2$-designs in
Theorem~\ref{thm:accinfobound} we use an alternative derivation which holds for
arbitrary (not necessarily uniformly-distributed) $2$-design ensembles.

\begin{thm}\label{thm:accinfobound}
  The accessible information $A(E)$ of any $d$-dimensional $2$-design ensemble
  $E = \{ \rho_x \}$ (not necessarily uniformly-distributed) is bounded as
  follows:
  \begin{align}\label{eq:accinfobound}
    A(E) \le \log \frac{2d}{d+1}.
  \end{align}
\end{thm}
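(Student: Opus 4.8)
The plan is to bound $I(E,P)$ for an \emph{arbitrary} POVM $P = \{ \Pi_y \}$ by rewriting it in terms of the transition probabilities and then applying Jensen's inequality exactly once, in the spirit of the proof of Theorem~\ref{thm:ensemblebound}. Write $\rho_x = p_x \ket{\phi_x}\bra{\phi_x}$ with $p_x := \Tr[\rho_x]$; setting $s = 1$ in Eqs.~\eqref{eq:2design} and~\eqref{eq:avgstate} shows the average state is $\openone/d$, so the outcome marginal is $q_y = \Tr[\Pi_y]/d$ and the joint distribution is $p_{x,y} = p_x \bra{\phi_x}\Pi_y\ket{\phi_x}$. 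Introducing $t_{x,y} := \bra{\phi_x}\Pi_y\ket{\phi_x}/\Tr[\Pi_y] \in [0,1]$ (effects with $\Tr[\Pi_y] = 0$ vanish and contribute nothing, so they may be discarded), the likelihood ratio becomes $p_{x,y}/(p_x q_y) = d\, t_{x,y}$, whence
\begin{align*}
  I(E,P) = \sum_{x,y} p_{x,y} \log(d\, t_{x,y}) = \log d + \sum_{x,y} p_{x,y} \log t_{x,y}.
\end{align*}

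First I would regard $t_{x,y}$ as a random variable on the pairs $(x,y)$ distributed according to the genuine probability distribution $p_{x,y}$ (genuine since $\sum_{x,y} p_{x,y} = \sum_x \Tr[\rho_x] = 1$), and write $\tau := \sum_{x,y} p_{x,y} t_{x,y}$ for its mean. Since $-\log$ is convex, Jensen's inequality gives $\sum_{x,y} p_{x,y} \log(1/t_{x,y}) \ge \log(1/\tau)$. Substituting into the displayed identity yields $I(E,P) \le \log d + \log \tau$, so the theorem will follow — uniformly in $P$, hence for the supremum defining $A(E)$ — once we show $\tau \le 2/(d+1)$, because then $I(E,P) \le \log d + \log \frac{2}{d+1} = \log \frac{2d}{d+1}$.

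Second, to estimate $\tau$ I would contract the $s = 2$ design identity $\sum_x p_x (\ket{\phi_x}\bra{\phi_x})^{\otimes 2} = (\openone + S)/(d(d+1))$ against $\Pi_y^{\otimes 2}$. Using $\Tr[B]^2 = \Tr[B^{\otimes 2}]$ together with $\Tr[\Pi_y^{\otimes 2} S] = \Tr[\Pi_y^2]$ gives $\sum_x p_x \bra{\phi_x}\Pi_y\ket{\phi_x}^2 = (\Tr[\Pi_y]^2 + \Tr[\Pi_y^2])/(d(d+1))$, so that
\begin{align*}
  \tau = \sum_y \frac{1}{\Tr[\Pi_y]} \sum_x p_x \bra{\phi_x}\Pi_y\ket{\phi_x}^2 = \frac{1}{d(d+1)} \sum_y \left( \Tr[\Pi_y] + \frac{\Tr[\Pi_y^2]}{\Tr[\Pi_y]} \right).
\end{align*}
Finally I would use the POVM normalization $\sum_y \Tr[\Pi_y] = \Tr[\openone] = d$ and the elementary bound $\Tr[\Pi_y^2] \le \Tr[\Pi_y]^2$ (valid because the eigenvalues of an effect are nonnegative, so the sum of their squares is at most the square of their sum) to conclude $\tau \le (d+d)/(d(d+1)) = 2/(d+1)$.

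The step I expect to be the main obstacle is the second-moment estimate: one must verify that the design identity can legitimately be paired with $\Pi_y^{\otimes 2}$ for a general effect — not necessarily rank one, and with a possibly non-uniformly-weighted ensemble — and that the resulting rank correction $\Tr[\Pi_y^2]/\Tr[\Pi_y]$ is dominated by $\Tr[\Pi_y]$ before summing over $y$. By contrast, the Jensen step and the bookkeeping with the convention $0\log 0 = 0$ (harmless, since $p_{x,y} = 0$ whenever $t_{x,y} = 0$) are routine. Note that this derivation never assumes the ensemble uniformly distributed nor the POVM rank one, which is precisely why it applies to arbitrary $2$-design ensembles.
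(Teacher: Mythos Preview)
Your argument is correct and follows essentially the same route as the paper: contract the $s=2$ design identity against the POVM to control the second moment $\sum_{x,y} p_{x,y}^2/(p_x q_y)$, then apply Jensen once to bound $I(E,P)$ by its logarithm. The one difference is that the paper first invokes Davies' theorem to reduce to rank-one effects, for which $\Tr[\Pi_y^2]=\Tr[\Pi_y]^2$ and the second moment equals $2d/(d+1)$ exactly; you instead keep the effects arbitrary and use the elementary spectral bound $\Tr[\Pi_y^2]\le\Tr[\Pi_y]^2$ to get $\tau\le 2/(d+1)$. This buys you a self-contained proof that avoids an external reference, at no real cost.
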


\begin{proof}
  Let $p_x := \Tr[\rho_x]$ and $\ket{\phi_x} \bra{\phi_x} := \rho_x /
  \Tr[\rho_x]$. By setting $s = 2$ in Eqs.~\eqref{eq:2design}
  and~\eqref{eq:avgstate} it follows that
  \begin{align*}
    \sum_x p_x (\ket{\phi_x}\bra{\phi_x})^{\otimes 2} = \frac{\openone
      + S}{d(d+1)}.
  \end{align*}
  where we used the fact that the projector on the symmetric subspace
  of $\hilb{H}^{\otimes 2}$ is $P_{\textrm{sym}} = \frac12 (\openone +
  S)$ and $S$ is the swap operator.
  
  By Davies' theorem~\cite{Dav78} it suffices to optimize over POVMs with
  rank-one elements. Let $P = \{ \Pi_y \}$ be such a POVM and let $q_y :=
  \Tr[\Pi_y]/d$ and $\ket{\pi_y} \bra{\pi_y} := \Pi_y / \Tr[\Pi_y]$.
  Multiplying both sides by $d^2 \sum_y q_y (\ket{\pi_y}\bra{\pi_y})^{\otimes
    2}$ and taking the trace we get
  \begin{align}\label{eq:coincidence3}
    \sum_{x,y} p_x q_y d^2 |\braket{\phi_x}{\pi_y}|^4 = \frac{2d}{d+1}
  \end{align}
  where we used the fact that $\Tr[\Pi^{\otimes 2} S] = \Tr[\Pi^2]$ for
  any effect $\Pi$.

  Since the joint probability of state $\rho_x$ and outcome $\Pi_y$ is $p_{x,y}
  = p_x q_y d |\braket{\phi_x}{\pi_y}|^2$, and its marginals are $p_x$ and $q_y$
  (we recall that $\sum_x p_x \ket{\phi_x}\bra{\phi_x} = \openone/d$), the
  logarithm of the left-hand side of Eq.~\eqref{eq:coincidence3} can be lower
  bounded by means of Jensen's inequality as follows:
  \begin{align*}
    \sum_{x,y} p_{x,y} \log \frac{p_{x,y}}{p_x q_y} \le \log \left( \sum_{x,y}
    p_{x,y} \frac{p_{x,y}}{p_x q_y} \right).
  \end{align*}

  The left-hand side is the mutual information $I(E, P)$ between ensemble $E$
  and POVM $P$, namely
  \begin{align*}
    I(E, P) := \sum_{x,y} p_{x,y} \log \frac{p_{x,y}}{p_x q_y} \le \log
    \frac{2d}{d+1}.
  \end{align*}
  Since $A(E) := \sup_P I(E, P)$, the statement follows.
\end{proof}

\begin{thm}\label{thm:infopowerbound}
  The informational power $W(P)$ of any $2$-design POVM $P = \{ \Pi_y
  \}$ uniformly distributed (namely, $\Tr[\Pi_y] = d/N$ for all $y$)
  is bounded as follows:
  \begin{align}\label{eq:infopowerbound}
    W(P) \le \log \frac{2d}{d+1}.
  \end{align}
\end{thm}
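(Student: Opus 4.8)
The plan is to combine the entropy bound of Theorem~\ref{thm:povmbound} with the decomposition $I(E,P) = H(Y) - H(Y|X)$, where $X$ indexes the states of the ensemble $E = \{\rho_x\}$ and $Y$ indexes the outcomes of $P = \{\Pi_y\}_{y=1}^N$. First I would bound the outcome entropy from above: since $P$ has exactly $N$ outcomes, $H(Y) \le \log N$ for every ensemble.

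Next I would bound the conditional entropy from below. Writing $p_x := \Tr[\rho_x]$ and $\sigma_x := \rho_x / p_x$ for the normalized state fed in on input $x$, the conditional probabilities are $p_{y|x} = \Tr[\sigma_x \Pi_y]$, so $H(Y|X) = \sum_x p_x H(P,\sigma_x)$. Each $\sigma_x$ is a normalized state, hence $\Tr[\sigma_x^2] \le 1$, and Theorem~\ref{thm:povmbound} (applied with $\rho = \sigma_x$) yields $H(P,\sigma_x) \ge \log\bigl(\frac{N(d+1)}{d}\,\frac{1}{1+\Tr[\sigma_x^2]}\bigr) \ge \log\frac{N(d+1)}{2d}$, the weakest case being rank-one $\sigma_x$, exactly as noted after Theorem~\ref{thm:povmbound}. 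Averaging over $x$ preserves the inequality, so $H(Y|X) \ge \log\frac{N(d+1)}{2d}$.

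Putting the two estimates together gives $I(E,P) \le \log N - \log\frac{N(d+1)}{2d} = \log\frac{2d}{d+1}$ for every $E$, and taking the supremum over $E$ establishes $W(P) \le \log\frac{2d}{d+1}$. The only delicate point is getting both estimates pointing the right way -- the unconditional outcome entropy bounded above and the conditional entropy bounded below -- so that the two $\log N$ contributions cancel; everything else is bookkeeping. I would also remark that a self-contained alternative, parallel to the proof of Theorem~\ref{thm:accinfobound}, works too: use $I(E,P) \le \sum_{x,y} p_{x,y} \log\frac{p_{x,y}}{p_x q_y}$ with the fixed distribution $q_y = \Tr[\Pi_y]/d$, apply Jensen's inequality, and then contract the $s=2$ $2$-design identity $\sum_y q_y (\ket{\pi_y}\bra{\pi_y})^{\otimes 2} = (\openone+S)/(d(d+1))$ against $\sum_x p_x (\ket{\phi_x}\bra{\phi_x})^{\otimes 2}$; this variant does not even require the POVM to be uniformly distributed.
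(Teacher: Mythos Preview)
Your main argument is essentially the paper's own proof: decompose $I(E,P)=H(Y)-H(Y\mid X)$, bound $H(Y)\le\log N$, and invoke Theorem~\ref{thm:povmbound} to get $H(Y\mid X=x)\ge\log\frac{N(d+1)}{2d}$. The only cosmetic difference is that the paper first restricts to pure-state ensembles via the Davies-like theorem and then applies the state-independent form of Theorem~\ref{thm:povmbound}, whereas you keep $\sigma_x$ possibly mixed and use $\Tr[\sigma_x^2]\le 1$; your route is marginally cleaner since it sidesteps that extra reduction.

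Your closing remark about the alternative derivation is also correct and worth highlighting: upper-bounding $I(E,P)$ by $\sum_{x,y}p_{x,y}\log\frac{p_{x,y}}{p_x q_y}$ with the fixed reference distribution $q_y=\Tr[\Pi_y]/d$ (valid because $D(p_{XY}\Vert p_X q_Y)=I(X;Y)+D(p_Y\Vert q_Y)\ge I(X;Y)$), then applying Jensen and the $2$-design identity for the POVM, reproduces Eq.~\eqref{eq:coincidence3} with the roles of ensemble and POVM swapped. As you note, this variant does not use uniformity of $P$, so it actually proves the bound for arbitrary $2$-design POVMs---a slight strengthening over what the paper states.
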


\begin{proof}
  Let $\ket{\pi_y} \bra{\pi_y} := N/d \Pi_y$. By a Davies-like
  theorem~\cite{DDS11} it suffices to optimize over ensembles of pure
  states. Let $E = \{ \rho_x \}$ be such an ensemble and let $p_x :=
  \Tr[\rho_x]$ and $\ket{\phi_x} \bra{\phi_x} := \rho_x / \Tr[\rho_x]$. The
  joint probability of outcome $\Pi_y$ given state $\rho_x$ is then $p_{x,y} =
  p_x d/N |\braket{\phi_x}{\pi_y}|^2$.

  Let $X$, $Y$ be random variables with $X$ distributed according to $p_x$ and
  $Y$ such that $p(X=x, Y=y) = p_{x,y}$. Then one has
  \begin{align*}
    W(P) \ge I(E, P) = H(Y) - \sum_x H(Y|X=x)
  \end{align*}
  One can trivially upper bound $H(Y)$ as $H(Y) \le \log N$. Due to the
  state-independent version of Theorem~\ref{thm:povmbound}, we have that
  $H(Y|X=x) \ge \log \frac{N(d+1)}{2d}$, from which the statement follows.
\end{proof}

\section{The case of SICs and $(d+1)$-MUBs}
\label{sec:sicmub}

Almost forty years ago it was conjectured~\cite{Dav78}, and very recently
proved~\cite{DDS11, DBO14}, that the accessible information and the
informational power of $2$-dimensional SIC ensembles and POVMs (tetrahedral
configuration) is given by $\log 4/3$, with optimality achieved by the antipodal
tetrahedral configuration. Very recently, the accessible information and the
informational power $W(P)$ of $3$-dimensional SIC ensembles and POVMs were
proven~\cite{DBO14, Szy14} to be given by $\log\frac32$, with optimality
achieved by an orthonormal configuration. We notice that these results follows
as corollaries from Theorems~\ref{thm:accinfobound}
and~\ref{thm:infopowerbound}.

Very recently, it was also proven~\cite{SS14} that the accessible information
and the informational power of $2$-dimensional $3$-MUB ensembles and POVMs is
given by $1/3$, with optimality achieved by a $3$-MUB configuration. In this
Section we extend those results by deriving the accessible information and
informational power of $3$-dimensional $4$-MUB ensembles and POVMs.

\begin{cor}\label{thm:mubqutrit}
  The accessible information $A(E)$ of any $3$-dimensional $4$-MUB ensemble $E$
  and the informational power $W(P)$ of any $3$-dimensional $4$-MUB POVM $P$ is
  given by
  \begin{align}
    A(E) = W(P) = \log\frac32.
  \end{align}
  The POVM attaining $A(E)$ is a SIC POVM and the ensemble attaining $W(P)$ is a
  SIC ensemble.
\end{cor}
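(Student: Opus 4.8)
The plan is to obtain the equality $A(E) = W(P) = \log\frac32$ by squeezing each quantity between the upper bound coming from the general $2$-design results of the previous section and a matching lower bound obtained by exhibiting an explicit measurement (resp. ensemble) that attains it. For the upper bound, I would simply invoke Theorem~\ref{thm:accinfobound} with $d=3$, which gives $A(E) \le \log\frac{2\cdot 3}{3+1} = \log\frac32$ for any $3$-dimensional $2$-design ensemble, and in particular for any $4$-MUB ensemble (a $4$-MUB set in dimension $3$ is a uniformly-distributed $2$-design by Definition~\ref{dfn:mub}). Likewise Theorem~\ref{thm:infopowerbound} gives $W(P) \le \log\frac32$ for any $3$-dimensional $4$-MUB POVM, since such a POVM is a uniformly-distributed $2$-design POVM with $\Tr[\Pi_y] = d/N = 3/12 = 1/4$.

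For the lower bound on $A(E)$, the claim in the corollary tells me what optimal measurement to use: I would take $P$ to be a $3$-dimensional SIC POVM and compute $I(E,P)$ explicitly when $E$ is the $4$-MUB ensemble. Here I would use the fact that $|\braket{\phi_{b,x}}{\pi_z}|^2$ is constant over all pairs (MUB vector, SIC vector): this follows because a SIC in dimension $3$ and a maximal set of MUBs in dimension $3$ are each $2$-designs, and more concretely one can check that $\sum_{b,x} p_{b,x} d\,|\braket{\phi_{b,x}}{\pi_z}|^2 = \Tr[\Pi_z] \cdot (1/d)\cdot$(normalization) forces, together with the flatness of overlaps that SIC and MUB structures impose, a single value $|\braket{\phi_{b,x}}{\pi_z}|^2 = 1/d = 1/3$ independent of $(b,x,z)$. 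If every conditional distribution $p_{(b,x)|z}$ is uniform over a subset, then $H(Y|X)$ is easy and $I(E,P)$ evaluates to a closed form; plugging in the numbers should give exactly $\log\frac32$, matching the upper bound. The dual argument, using Lemma~\ref{lmm:duality}, or a direct computation with $E$ a SIC ensemble and $P$ the $4$-MUB POVM, gives the matching lower bound $W(P) \ge \log\frac32$.

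The main obstacle is the overlap computation: I need to justify carefully that a SIC and a maximal MUB set in dimension $3$ are \emph{mutually} flat, i.e. that $|\braket{\phi_{b,x}}{\pi_z}|^2$ really is constant, rather than merely that each family is individually a $2$-design. This is where the argument is genuinely dimension-specific ($d=3$), and I would either cite the known explicit Weyl--Heisenberg constructions of SICs and MUBs in dimension $3$ (the Hesse SIC and the associated MUBs are closely related group-covariant objects) or give a short representation-theoretic argument: both families are orbits of the (projective) Heisenberg--Weyl group, and the relative-overlap function between two such orbits, being invariant under the group action on both arguments, is constant whenever the stabilizers act transitively enough — which in $d=3$ they do. Once flatness is in hand, the information computation is routine.

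Finally, I would record that equality in the Jensen step of Theorem~\ref{thm:accinfobound} (and of Theorem~\ref{thm:infopowerbound}) requires precisely the flat-overlap condition $d^2|\braket{\phi_x}{\pi_y}|^4$ to be constant, which is exactly what the SIC-vs-MUB pairing delivers; this simultaneously proves the bound is attained and identifies the optimizers as stated, namely a SIC POVM for $A(E)$ and a SIC ensemble for $W(P)$.
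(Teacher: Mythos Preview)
Your overall strategy matches the paper's: the upper bound comes from Theorems~\ref{thm:accinfobound} and~\ref{thm:infopowerbound} with $d=3$, and the lower bound comes from pairing the $4$-MUB with a SIC. But the central technical claim in your lower-bound step is false, and if it were true it would destroy the argument. The overlaps $|\braket{\phi_{b,x}}{\pi_z}|^2$ between MUB vectors and SIC vectors in dimension $3$ are \emph{not} all equal to $1/3$. If they were, the joint distribution $p_{x,y}$ would factor as $p_x q_y$ and you would get $I(E,P)=0$, not $\log\frac32$. Nothing about both families individually being $2$-designs forces their \emph{cross}-overlaps to be constant; the identity you invoke only fixes the average overlap.

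What actually happens for the Hesse SIC against the standard qutrit MUBs is that the overlaps take exactly two values, $0$ and $\tfrac12$: each MUB vector is orthogonal to $3$ of the $9$ SIC vectors and has squared overlap $\tfrac12$ with the other $6$ (and dually each SIC vector is orthogonal to $4$ of the $12$ MUB vectors). Hence $H(Y|X=x)=\log 6$ for every $x$, $H(Y)=\log 9$, and $I(E,P)=\log\frac{9}{6}=\log\frac32$. The Jensen step in Theorem~\ref{thm:accinfobound} is then tight because equality only requires $p_{x,y}/(p_xq_y)$ to be constant on the \emph{support} of $p_{x,y}$, and here that ratio equals $3\cdot\tfrac12=\tfrac32$ wherever $p_{x,y}>0$. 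The paper bypasses all of this by simply writing down explicit coordinate matrices for both the (unique up to unitaries) $4$-MUB and a SIC, and verifying $I(E,Q)=I(F,P)=\log\frac32$ by direct inspection; your group-covariance idea, if pushed through correctly, would recover the two-valued overlap pattern rather than the constant one you assert.
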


\begin{proof}
  The $3$-dimensional $4$-MUB ensemble $E$ or POVM $P$ is unique~\cite{BWB10} up
  to unitary transformations and permutations of elements. For some fixed
  orthonormal basis and up to a normalization, the coefficients of the vectors
  of $E$ and $P$ are given by the columns of the following matrix
  \begin{align*}
    \frac1{\sqrt3}
    \begin{bmatrix}
      \sqrt3 & 0 & 0 & 1 & 1 & 1 & 1 & 1 & 1 & 1 & 1 & 1\\
      0 & \sqrt3 & 0 & 1 & \omega & \omega^2 & \omega & \omega^2 & 1 & \omega^2 & \omega & 1\\
      0 & 0 & \sqrt3 & 1 & \omega^2 & \omega & \omega & 1 & \omega^2 & \omega^2 & 1 & \omega
    \end{bmatrix}
  \end{align*}
  where $\omega = e^{i 2\pi/3}$.

  Denote with $Q$ and $F$, respectively, the POVM and ensemble whose vectors are
  given, up to a normalization, by the columns of the following matrix
  \begin{align*}
    \frac1{\sqrt2}
    \begin{bmatrix}
      0 & 0 & 0 & 1 & 1 & 1 & -1 & \xi & \xi^*\\
      -1 & \xi & \xi^* & 0 & 0 & 0 & 1 & 1 & 1\\
      1 & 1 & 1 & -1 & \xi & \xi^* & 0 & 0 & 0
    \end{bmatrix}
  \end{align*}
  where $\xi = e^{i \pi/3}$. It is immediate to verify that $Q$ and $F$ are a
  SIC POVM and ensemble, respectively.

  By direct inspection it follows that $I(E, Q) = I(F, P) = \log3/2$. Since by
  Definitions~\ref{def:accinfo} and~\ref{def:infopower} we have $I(E, Q) \le
  A(E)$ and $I(F, P) \le W(P)$, and by Theorems~\ref{thm:accinfobound}
  and~\ref{thm:infopowerbound} we have that $A(E), W(P) \le \log3/2$, the
  statement follows.
\end{proof}

We now compare the optimal strategies attaining the accessible information and
the informational power, with the so-called pretty-good strategies~\cite{HW94,
  Hal97, Bus07, BH09}.  Given a $d$-dimensional $2$-design ensemble $E = \{
\rho_x \}$, its pretty-good POVM is $P = \{ \Pi_x \}$ with $\Pi_x = d \rho_x$;
analogously, given a $d$-dimensional $2$-design POVM $P = \{ \Pi_y \}$, its
pretty-good ensemble is $E = \{ \rho_y \}$ with $\rho_y = \Pi_y/d$.

For SIC ensembles and POVMs the mutual information given by the pretty-good
strategy is
\begin{align}\label{eq:prettygoodsic}
  I(E, d E) = I(P/d, P) = \log d - \frac{d-1}d \log(d+1).
\end{align}
The right-hand side of Eq.~\eqref{eq:prettygoodsic} is smaller than
the lower bound in Eq.~\eqref{eq:scrooge} for any $d$. Then the
pretty-good strategy for SIC ensembles and POVMs is suboptimal for any
$d$.

For $(d+1)$-MUB ensembles and POVMs the mutual information given by the
pretty-good strategy is
\begin{align}\label{eq:prettygoodmub}
  I(E, d E) = I(P/d, P) = \frac{\log d}{d+1}.
\end{align}
The right-hand side of Eq.~\eqref{eq:prettygoodmub} coincides with the
optimal value $1/3$ for $d = 2$. However, it is smaller than the value
$\log3/2$ provided by Corollary~\ref{thm:mubqutrit} for $d = 3$, and it is
smaller than the lower bound in Eq.~\eqref{eq:scrooge} for any $d \ge
4$. Then the pretty-good strategy for $(d+1)$-MUB ensembles and POVMs
is optimal for $d=2$ and suboptimal for $d \ge 3$.

We report the results of this Section in Fig.~\ref{fig:bounds}.
\begin{figure}[htb]
  \includegraphics[width=\columnwidth]{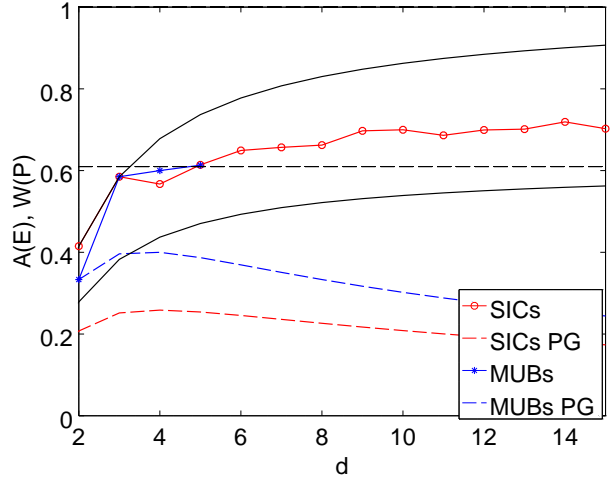}
  \caption{(Color online) Upper and lower bounds [thick (black) continuous
      lines] and their asymptotes [horizontal (black) dashed lines] on
    accessible information $A(E)$ and informational power $W(P)$ of any
    $2$-design ensemble $E$ and uniformly-distributed $2$-design POVM $P$, as a
    function of the dimension $d$, as given by Theorems~\ref{thm:accinfobound}
    and~\ref{thm:infopowerbound}. Accessible information and informational power
    of SIC ensembles and POVMs (red continuous line with circles), and
    corresponding pretty-good (PG) strategy [lower (red) dashed line] as in
    Eq.~\eqref{eq:prettygoodsic}. Accessible information and informational power
    of $(d+1)$-MUB ensembles and POVMs (blue continuous line with asterisks),
    and corresponding pretty-good (PG) strategy [upper (blue) dashed line] as in
    Eq.~\eqref{eq:prettygoodmub}. For red and blue continuous lines (with
    circles and asterisks, respectively), values are analytically derived for
    $d=2,3$ (see Corollary~\ref{thm:mubqutrit}), and numerically derived for $d
    \ge 4$.  Numerical optimization was performed over SICs (up to dimension
    $15$) and $(d+1)$-MUBs (up to dimension $5$, no example is known in
    dimension $6$) as provided in Refs.~\cite{SG10, BWB10}.}
  \label{fig:bounds}
\end{figure}

\section{Arbitrary-rank SICs and $(d+1)$-MUBs}
\label{sec:gensicmub}

Symmetric informationally complete sets and mutually unbiased bases were
generalized~\cite{KG13, KG14} to the case of arbitrary-rank states and elements
in the following way.

\begin{dfn}[Arbitrary-rank SIC set]\label{dfn:generalizedsic}
  A $d$-dimensional arbitrary-rank SIC set $\{ p_x, \rho_x \}_{x+1}^{d^2}$ is a
  uniform probability distribution $p_x = 1/d^2$ over mixed states $\rho_x$ such
  that
  \begin{align}\label{eq:generalizedsic}
    \Tr[\rho_x \rho_{x'}] = \delta_{x,x'} d^2 a + (1 - \delta_{x,x'})
    \frac{d(1-da)}{d^2-1}.
  \end{align}
  for some $1/d^3 \le a \le 1/d^2$ and $\sum_x \rho_x = d \openone$.
\end{dfn}

Notice that the rank-one case is recovered for $a = 1/d^2$, while for $a =
1/d^3$ one has a set of maximally mixed operators. An ensemble $E = \{ \sigma_x
\}$ is an {\em arbitrary-rank SIC ensemble} if $\{ p_x, \rho_x \}$ is an
arbitrary-rank SIC set, with $p_x := \Tr[\sigma_x]$ and $\rho_x := \sigma_x /
\Tr[\sigma_x]$. A POVM $P = \{ \Pi_y \}$ is an {\em arbitrary-rank SIC POVM} if
$\{ q_y, \rho_y \}$ is an arbitrary-rank SIC set, with $q_y := \Tr[\Pi_y]/d$ and
$\rho_y := \Pi_y / \Tr[\Pi_y]$.

\begin{dfn}[Arbitrary-rank $(d+1)$-MUB set]\label{dfn:generalizedmub}
  A $d$-dimensional arbitrary-rank $(d+1)$-MUB set $\{ p_{b, x}, \rho_{b, x}
  \}_{b = 1,\dots d+1, x=1,\dots d}$ is a uniform probability distribution $p_x
  = 1/d(d+1)$ over mixed states $\rho_{b, x}$ such that
  \begin{align}\label{eq:generalizedmub}
    & \Tr[\rho_{b, x} \rho_{b', x'}]\nonumber\\ = & \delta_{b,b'}\delta_{x,x'} k
    + \delta_{b,b'} (1 - \delta_{x,x'}) \frac{1-k}{d-1} +
    \frac{1-\delta_{b,b'}}d.
  \end{align}
  for some $1/d \le k \le 1$ and $\sum_{b, x} \rho_{b, x} = (d+1) \openone$.
\end{dfn}

Notice that the rank-one case is recovered for $k = 1$, while for $k = 1/d$ one
has a set of maximally mixed operators. An ensemble $E = \{ \sigma_{b, x} \}$ is
an {\em arbitrary-rank $(d+1)$-MUB ensemble} if $\{ p_{b, x}, \rho_{b, x} \}$ is
an arbitrary-rank $(d+1)$-MUB set, with $p_{b, x} := \Tr[\sigma_{b, x}]$ and
$\rho_{b, x} := \sigma_{b, x} / \Tr[\sigma_{b, x}]$. A POVM $P = \{ \Pi_y \}$ is
an {\em arbitrary-rank MUB POVM} if $\{ q_{b, y}, \rho_{b, y} \}$ is an
arbitrary-rank $(d+1)$-MUB set, with $q_{b, y} := \Tr[\Pi_{b, y}]/d$ and
$\rho_{b, y} := \Pi_{b, y} / \Tr[\Pi_{b, y}]$.

\begin{thm}\label{thm:generalizedsic}
  The accessible information $A(E)$ of any arbitrary-rank SIC ensemble $E = \{
  \rho_x \}$ and the informational power $W(P)$ of any arbitrary-rank SIC POVM
  $P = \{ \Pi_y \}$ are bounded as follows:
  \begin{align}
    A(E), W(P) \le \log \frac{d (d^2a + 1)}{d+1}.
  \end{align}
\end{thm}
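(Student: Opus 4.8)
The plan is to mimic the proof of Theorem~\ref{thm:accinfobound} almost verbatim, tracking how the ``coincidence'' quantity $\sum_{x,y} p_x q_y d^2 |\braket{\phi_x}{\pi_y}|^4$ generalizes when the states $\rho_x$ (or effects) are no longer rank one. The key observation is that an arbitrary-rank SIC set is still a weighted $2$-design in the sense that $\sum_x p_x \rho_x^{\otimes 2}$ is a fixed operator in the symmetric subspace; concretely, Eq.~\eqref{eq:generalizedsic} together with $\sum_x \rho_x = d\openone$ pins down $\sum_x p_x \rho_x^{\otimes 2}$ as a linear combination $\alpha\,\openone + \beta\, S$ of the identity and the swap operator, where $\alpha,\beta$ are determined by imposing $\Tr$ against $\openone^{\otimes 2}$ and against $S$. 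The trace against $\openone^{\otimes 2}$ gives $(\sum_x p_x \Tr\rho_x)^2 = 1$ — wait, more carefully $\sum_x p_x (\Tr\rho_x)^2 = \sum_x p_x = 1$ since the $\rho_x$ are normalized — and the trace against $S$ gives $\sum_x p_x \Tr[\rho_x^2] = \frac1{d^2}\cdot d^2 a = a$ (using the $x=x'$ case of Eq.~\eqref{eq:generalizedsic} with $p_x = 1/d^2$). Solving the two-by-two linear system $\alpha d^2 + \beta d = 1$, $\alpha d + \beta d^2 = a$ yields $\sum_x p_x \rho_x^{\otimes 2}$ explicitly.

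Next I would bring in the POVM. By Davies' theorem it suffices to optimize over rank-one POVMs $\{\Pi_y\}$, write $q_y := \Tr[\Pi_y]/d$ and $\ket{\pi_y}\bra{\pi_y} := \Pi_y/\Tr[\Pi_y]$, and form $d^2 \sum_y q_y (\ket{\pi_y}\bra{\pi_y})^{\otimes 2}$, then take the trace of its product with $\sum_x p_x \rho_x^{\otimes 2}$. Using $\Tr[A\otimes A \cdot S] = \Tr[A^2]$ and the normalization of the $\pi_y$, this collapses to $\sum_{x,y} p_x q_y d^2 (\bra{\pi_y}\rho_x\ket{\pi_y})^2 = \alpha d^2 + \beta d^2 = d^2(\alpha+\beta)$, which by the solution of the linear system equals $\frac{d(d^2 a + 1)}{d+1}$ — exactly the bound's argument. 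Then, exactly as in Theorem~\ref{thm:accinfobound}, I note that the joint distribution is $p_{x,y} = p_x q_y d\, \bra{\pi_y}\rho_x\ket{\pi_y}$ with the correct marginals $p_x$ (since $\sum_y \Pi_y = \openone$ and $\sum_x p_x\rho_x = \openone/d$... here $\sum_x \rho_x = d\openone$ so $\sum_x p_x \rho_x = \openone/d$, good) and $q_y$, so that $\sum_{x,y} p_{x,y}\, p_{x,y}/(p_x q_y) = \sum_{x,y} p_x q_y d^2 (\bra{\pi_y}\rho_x\ket{\pi_y})^2$, and Jensen's inequality applied to $\log$ gives $I(E,P) \le \log\frac{d(d^2a+1)}{d+1}$. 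Taking the supremum over $P$ gives the bound on $A(E)$; the bound on $W(P)$ follows by the duality of Lemma~\ref{lmm:duality} (or by an entropy argument analogous to Theorem~\ref{thm:infopowerbound}, noting that an arbitrary-rank SIC POVM has the same second-moment structure).

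The step I expect to be the main obstacle is verifying that the duality argument for $W(P)$ genuinely goes through in the arbitrary-rank setting: Lemma~\ref{lmm:duality} expresses $W(P)$ as a supremum of accessible informations of ensembles $\{\rho^{1/2}\Pi_y\rho^{1/2}\}$, and one needs the second-moment identity $\sum_y q_y' (\text{normalized }\rho^{1/2}\Pi_y\rho^{1/2})^{\otimes 2}$ to still land in the span of $\openone$ and $S$ with the right coefficients — which is not automatic once $\rho \ne \openone/d$ unless one re-derives it, or instead one should run the direct entropy-based argument (analogue of Theorem~\ref{thm:infopowerbound}) using the state-independent form of an arbitrary-rank version of Theorem~\ref{thm:povmbound}. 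I would therefore present the $A(E)$ bound via the direct coincidence-probability computation above, and obtain the $W(P)$ bound by the parallel computation that swaps the roles of states and effects (optimizing over pure-state ensembles against the fixed arbitrary-rank SIC POVM), which is symmetric under $\rho_x \leftrightarrow \Pi_y$ up to the $\Tr[\Pi_y]/d$ bookkeeping and hence yields the identical numerical bound.
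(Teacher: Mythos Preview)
Your route is genuinely different from the paper's. The paper does not recompute the second moment at all: it writes $I(E,P)=H(Y)-\sum_x p_x H(Y|X=x)$, bounds $H(Y)\le\log d^2$, and then quotes an external entropic uncertainty bound of Rastegin for $H(Y|X=x)$ (which simplifies to $H(Y|X=x)\ge\log\frac{d(d+1)}{d^2a+1}$), giving the result for $W(P)$ directly; the bound for $A(E)$ is then pulled back via Lemma~\ref{lmm:duality} with $\rho=\openone/d$. Your approach is more self-contained and closer in spirit to Theorem~\ref{thm:accinfobound}, but it hinges on a step you have not actually established.

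The gap is your ``key observation'' that $\sum_x p_x\,\rho_x^{\otimes 2}=\alpha\,\openone+\beta\,S$. Computing the two traces $\Tr[\,\cdot\,]$ and $\Tr[\,\cdot\,S]$ determines $\alpha,\beta$ \emph{only if you already know} the operator lies in $\operatorname{span}\{\openone,S\}$; Definition~\ref{dfn:generalizedsic} fixes only those two scalars, not the full operator. For rank-one $2$-designs this is the content of Lemma~\ref{lmm:avgstate}, but for arbitrary-rank SICs it requires an independent argument. It is in fact true: writing $\rho_x=\tfrac{1}{d}\openone+\sum_i c_{x,i}\Lambda_i$ in a traceless orthonormal basis, the conditions $\sum_x\vec c_x=0$ and $\vec c_x\!\cdot\!\vec c_{x'}=\text{const}$ for $x\ne x'$ force the $d^2\times d^2$ Gram matrix of the $\vec c_x$ to have a single zero eigenvalue and $d^2-1$ equal nonzero eigenvalues, whence the $(d^2{-}1)\times(d^2{-}1)$ frame operator $\sum_x\vec c_x\vec c_x^{\,T}$ is a scalar multiple of the identity; this is exactly what makes $\sum_x p_x\rho_x^{\otimes2}$ land in $\operatorname{span}\{\openone,S\}$. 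You should insert this argument explicitly. (There is also a slip: $\sum_x p_x\Tr[\rho_x^2]=d^2a$, not $a$; your stated system $\alpha d+\beta d^2=a$ is inconsistent with the final value $\tfrac{d(d^2a+1)}{d+1}$ you correctly quote.)

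Once the second-moment identity is in place, both halves go through by your direct coincidence computation; the $W(P)$ side does not need Lemma~\ref{lmm:duality} at all, since for any pure-state ensemble one has $I(E,P)\le\log\sum_{x,y}p_{x,y}^2/(p_xq_y)$ with the uniform reference $q_y=1/d^2$ (using $D(p_Y\|q_Y)\ge0$ plus Jensen), and the right-hand side equals $d^2(\alpha+\beta)$ by the same identity. Compared to the paper's proof, yours avoids the external citation at the price of proving the tight-frame lemma; the paper's avoids that lemma at the price of importing Rastegin's bound.
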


\begin{proof}
  Let us first prove the statement for $W(P)$, then the statement for $A(E)$
  will immediately follows from Lemmas~\ref{lmm:duality} and~\ref{lmm:avgstate}.
  
  By a Davies-like theorem~\cite{DDS11} it suffices to optimize over ensembles
  of pure states. Let $F = \{ \sigma_x \}$ be such an ensemble and let $p_x :=
  \Tr[\sigma_x]$ and $\ket{\phi_x} \bra{\phi_x} := \sigma_x /
  \Tr[\sigma_x]$. The joint probability of state $\sigma_x$ and outcome $\Pi_y$
  is then $p_{x,y} = p_x \bra{\phi_x} \Pi_y \ket{\phi_x}$.

  Let $X$, $Y$ be random variables with $X$ distributed according to $p_x$ and
  $Y$ such that the joint probability of X and Y is $p_{x,y}$. Then one has
  \begin{align*}
    W(P) \ge I(E, P) = H(Y) - \sum_x p_x H(Y|X=x).
  \end{align*}
  One can trivially upper bound $H(Y)$ as $H(Y) \le \log d^2$. It was
  proven~\cite{Ras14} that
  \begin{align*}
    H(Y|X=x) \ge -\log \left( \frac{ad^3 -ad^2 +d -1}{d(d^2-1)} \right),
  \end{align*}
  from which the statement follows.
\end{proof}

\begin{thm}\label{thm:generalizedmub}
  The accessible information $A(E)$ of any generalized $(d+1)$-MUB ensemble $E =
  \{ \rho_x \}$ and the informational power $W(P)$ of any generalized
  $(d+1)$-MUB POVM $P = \{ \Pi_y \}$ are bounded as follows:
  \begin{align}
    A(E), W(P) \le \log \frac{d (k+1)}{d+1}.
  \end{align}
\end{thm}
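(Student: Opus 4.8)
The plan is to mirror the proof of Theorem~\ref{thm:generalizedsic}. First I would prove the bound for $W(P)$; the bound for $A(E)$ then follows immediately from Lemmas~\ref{lmm:duality} and~\ref{lmm:avgstate}, since the average state of an arbitrary-rank $(d+1)$-MUB ensemble is $\openone/d$, so such an ensemble equals $\{(\openone/d)^{1/2}\Pi_y(\openone/d)^{1/2}\}$ for an arbitrary-rank $(d+1)$-MUB POVM $\{\Pi_y\}$ with the same parameter $k$. For $W(P)$, a Davies-like theorem~\cite{DDS11} reduces the problem to ensembles of pure states; writing $I(E,P)=H(Y)-\sum_x p_x H(Y|X=x)$ with $Y$ the $d(d+1)$-valued outcome variable and bounding $H(Y)\le\log\bigl(d(d+1)\bigr)$ trivially, everything comes down to a state-independent lower bound on the conditional entropy.

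The crux is to show $H(Y|X=x)\ge-\log\frac{k+1}{(d+1)^2}$. As in Theorems~\ref{thm:ensemblebound} and~\ref{thm:povmbound}, Jensen's inequality gives $H(Y|X=x)\ge-\log\sum_{b,y}\bra{\phi_x}\Pi_{b,y}\ket{\phi_x}^2$; uniformity forces $\Tr[\Pi_{b,y}]=1/(d+1)$, hence $\Pi_{b,y}=\rho_{b,y}/(d+1)$, so it suffices to prove that $\sum_{b,y}\bra{\phi}\rho_{b,y}\ket{\phi}^2=k+1$ for every pure state $\ket{\phi}$. I would extract this from the overlap relations~\eqref{eq:generalizedmub} in three steps: (i) for each basis $b$ one has $\Tr[\sum_y\rho_{b,y}]=\Tr[(\sum_y\rho_{b,y})^2]=d$, so by positivity $\sum_y\rho_{b,y}=\openone$; (ii) the traceless parts $\tau_{b,y}:=\rho_{b,y}-\openone/d$ satisfy $\Tr[\tau_{b,y}\tau_{b',y'}]=0$ for $b\neq b'$, and for fixed $b$ they span a $(d-1)$-dimensional space, so the $d+1$ mutually orthogonal blocks jointly exhaust the $(d^2-1)$-dimensional traceless subspace; (iii) the per-basis frame superoperator $\sigma\mapsto\sum_y\Tr[\rho_{b,y}\sigma]\rho_{b,y}$ fixes $\openone$ and acts as the scalar $\frac{dk-1}{d-1}$ on $\mathrm{span}\{\tau_{b,y}\}_y$. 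Summing over $b$ and evaluating at $\sigma=\ket{\phi}\bra{\phi}$ then gives $\sum_{b,y}\bra{\phi}\rho_{b,y}\ket{\phi}^2=\frac{d+1}{d}+\frac{dk-1}{d-1}\bigl(1-\tfrac1d\bigr)=k+1$, whence $W(P)\le\log\bigl(d(d+1)\bigr)-\log\frac{(d+1)^2}{k+1}=\log\frac{d(k+1)}{d+1}$. (Alternatively, one can bypass steps (i)--(iii) altogether by quoting the conditional-entropy bound of Ref.~\cite{Ras14}, exactly as done for arbitrary-rank SICs.)

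I expect step (ii) to be the main obstacle: one must derive, from the purely scalar conditions~\eqref{eq:generalizedmub}, the operator-level fact that the traceless blocks of distinct bases are mutually orthogonal and together span the whole traceless subspace — it is precisely this that makes $\sum_{b,y}\bra{\phi}\rho_{b,y}\ket{\phi}^2$ independent of $\ket{\phi}$. Note that $\sum_{b,y}\rho_{b,y}^{\otimes2}$ is \emph{not} proportional to the symmetric-subspace projector unless $k=1$, so arbitrary-rank MUBs are genuinely not $2$-designs and Theorem~\ref{thm:povmbound} cannot be invoked directly; everything else — the Davies reduction, the bound $H(Y)\le\log(d(d+1))$, the Jensen step, and the duality argument for $A(E)$ — is a routine transcription of the SIC case.
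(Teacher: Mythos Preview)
Your proposal follows the same skeleton as the paper's proof: establish the bound for $W(P)$ first and deduce $A(E)$ via Lemma~\ref{lmm:duality} (using that the average state is $\openone/d$), reduce to pure-state ensembles by the Davies-like theorem, write $I(E,P)=H(Y)-\sum_x p_x H(Y|X=x)$, bound $H(Y)\le\log\bigl(d(d+1)\bigr)$ trivially, and invoke a state-independent lower bound $H(Y|X=x)\ge\log\frac{(d+1)^2}{k+1}$. The only substantive difference is how that last inequality is obtained: the paper simply cites Ref.~\cite{CS14} (not Ref.~\cite{Ras14}, which is the one used for the SIC case), whereas you supply a self-contained derivation of the index-of-coincidence identity $\sum_{b,y}\langle\phi|\rho_{b,y}|\phi\rangle^2=k+1$ via Jensen. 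Your three-step argument is correct: step~(i) follows since a positive operator $A$ with $\Tr[A]=\Tr[A^2]=d$ in dimension $d$ must equal $\openone$; step~(ii) follows from the overlap values $\Tr[\tau_{b,y}\tau_{b',y'}]=0$ for $b\neq b'$ and the dimension count $(d+1)(d-1)=d^2-1$; step~(iii) is a direct computation exploiting $\sum_y\tau_{b,y}=0$, and the final evaluation $\frac{d+1}{d}+\frac{dk-1}{d-1}\cdot\frac{d-1}{d}=k+1$ checks out. What you gain is a proof that does not rely on an external entropic-uncertainty result, at the cost of extra length; the paper's version is shorter but black-boxes the key estimate. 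In your parenthetical alternative, the correct citation is \cite{CS14}, not \cite{Ras14}.
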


\begin{proof}
  Let us first prove the statement for $W(P)$, then the statement for $A(E)$
  will immediately follows from Lemmas~\ref{lmm:duality} and~\ref{lmm:avgstate}.
  
  By a Davies-like theorem~\cite{DDS11} it suffices to optimize over ensembles
  of pure states. Let $F = \{ \sigma_x \}$ be such an ensemble and let $p_x :=
  \Tr[\sigma_x]$ and $\ket{\phi_x} \bra{\phi_x} := \sigma_x /
  \Tr[\sigma_x]$. The joint probability of state $\sigma_x$ and outcome $\Pi_y$
  is then $p_{x,y} = p_x \bra{\phi_x} \Pi_y \ket{\phi_x}$.

  Let $X$, $Y$ be random variables with $X$ distributed according to $p_x$ and
  $Y$ such that the joint probability of $X$ and $Y$ is $p_{x,y}$. Then one has
  \begin{align*}
    W(P) \ge I(E, P) = H(Y) - \sum_x p_x H(Y|X=x).
  \end{align*}
  One can trivially upper bound $H(Y)$ as $H(Y) \le \log (d(d+1))$. It was
  proven~\cite{CS14} that
  \begin{align*}
    H(Y|X=x) \ge \log \frac{(d+1)^2}{k+1},
  \end{align*}
  from which the statement follows.
\end{proof}

Notice that bounds on the accessible information and informational power of
rank-one SIC ensembles and POVMs~\cite{DBO14} can be obtained as a corollary of
Theorem~\ref{thm:generalizedsic} by setting $a = 1/d^2$; analogously, bounds for
$(d+1)$-MUBs~\cite{San95, DHLST04} can be obtained as a corollary of
Theorem~\ref{thm:generalizedmub} by setting $k = 1$. In the maximally mixed
case, by setting $a = 1/d^3$ (resp., $k = 1/d$) in
Theorem~\ref{thm:generalizedsic} (resp. Theorem~\ref{thm:generalizedmub}), one
has that accessible information and informational power vanish as expected.

\section{Conclusion and outlook}
\label{sec:conclusion}

We derived effect-dependent and effect-independent lower bounds on the entropy
of the input distribution of $2$-design ensembles; analogously, we derived
state-dependent and state-independent lower bounds on the entropy of the output
distribution of $2$-design measurements. From these results, we derived upper
bounds on the accessible information and the informational power of $2$-design
ensembles and measurements, as a function of the dimension of the system
only. As a consequence, we showed that, perhaps surprisingly, the statistics
generated by $2$-designs, although optimal for testing of entropic uncertainty
relations, quantum cryptography and tomography, never contains more than one bit
of information. As particular cases, we provided the accessible information and
informational power of SIC and MUB ensembles and POVMs (analytically for
dimensions two and three, numerically otherwise). Finally, we extended our
results to generalizations of SICs and MUBs with arbitrary rank.

We conclude by presenting a few relevant open problems. Analytically
characterizing the accessible information and the informational power of SIC
ensembles and POVMs in dimension larger than $3$ seems a hard task, as suggested
by the irregular behavior of the corresponding line in
Fig.~\ref{fig:bounds}. However, preliminary results seem to suggest that the
same task for $(d+1)$-MUB ensembles and POVMs could be feasible. Indeed,
consider the unique~\cite{BWB10} (up to unitary transformations and permutation
of elements) $4$-dimensional $(d+1)$-MUB ensemble $E$ and POVM $P$, whose
coefficients with respect to some fixed orthonormal basis are given (up to a
normalization) by the columns of the following matrices
\begin{align*}
  \frac12 &
  \begin{bmatrix}
    2 & 0 & 0 & 0 & 1 & 1 & 1 & 1 & 1 & 1\\
    0 & 2 & 0 & 0 & 1 & 1 &-1 &-1 &-1 &-1\\
    0 & 0 & 2 & 0 & 1 &-1 &-1 & 1 &-i & i\\
    0 & 0 & 0 & 2 & 1 &-1 & 1 &-1 &-i & i
  \end{bmatrix},\\
  \frac12 &
  \begin{bmatrix}
    & 1 & 1 & 1 & 1 & 1 & 1 & 1 & 1 & 1 & 1\\
    & 1 & 1 &-i &-i & i & i &-i &-i & i & i\\
    & i &-i &-i & i & i &-i &-1 & 1 &-1 & 1\\
    &-i & i &-1 & 1 &-1 & 1 &-i & i & i &-i
  \end{bmatrix}.
\end{align*}
The orthonormal POVM $Q$ and ensemble $F$ given by
\begin{align*}
  \begin{bmatrix}
    \frac1{\sqrt{2}} & \frac12 & 0 & \frac12\\
    \frac{i}{\sqrt{2}} & -\frac{i}2 & 0 & -\frac{i}2\\
    0 & -\frac{i}2 & \frac{i}{\sqrt{2}} & \frac{i}2\\
    0 & \frac12 & \frac1{\sqrt{2}} & -\frac12
  \end{bmatrix}
\end{align*}
are such that $I(E, Q) = I(F, P) = 3/5$, and we conjecture that this value is
optimal, namely $A(E) = W(P) = 3/5$.

Another relevant open problem is whether there exists a maximally informative
$2$-design, namely a $2$-design saturating the upper bound in
Theorems~\ref{thm:accinfobound} and~\ref{thm:infopowerbound} for any dimension
$d$. We showed that the answer is on the affirmative for $d = 2$, where SIC
ensembles and POVMs are optimal, and for $d = 3$, where both SIC and $(d+1)$-MUB
ensembles and POVMs are optimal. However, numerical results presented in
Fig.~\ref{fig:bounds} seem to suggest that, for $d \ge 4$, nor SICs nor
$(d+1)$-MUBs are maximally informative $2$-designs.

Finally, a very interesting open question is how to generalize arbitrary quantum
$2$-designs to the arbitrary-rank case, in the same spirit of the
generalizations for SICs and MUBs previously discussed~\cite{KG13, KG14}, and how
to quantify their accessible information and informational power. We believe
that these tantalizing open problems well deserve future investigation.

\section*{Acknowledgments}

The author is grateful to Francesco Buscemi, Chris Fuchs, Massimiliano
F. Sacchi, Wojciech S{\l}omczy\'{n}ski, Anna Szymusiak, and Vlatko Vedral for
very useful discussions, comments, and suggestions. This work was supported by
the Ministry of Education and the Ministry of Manpower (Singapore).

\end{document}